\documentclass[11pt,a4paper]{article}

\usepackage[utf8]{inputenc}
\usepackage[T1]{fontenc}
\usepackage{amsmath,amssymb,amsthm}
\usepackage{booktabs}
\usepackage{graphicx}
\usepackage{hyperref}
\usepackage[margin=1in]{geometry}
\usepackage{natbib}
\usepackage{xcolor}
\usepackage{algorithm}
\usepackage{algpseudocode}
\usepackage{listings}
\usepackage{caption}
\usepackage{subcaption}
\usepackage{multirow}
\usepackage{tabularx}
\usepackage{microtype}

\newtheorem{theorem}{Theorem}
\newtheorem{definition}{Definition}

\title{The Global Representativeness Index:\\A Total Variation Distance Framework for Measuring\\Demographic Fidelity in Survey Research}

\author{Evan Hadfield\thanks{Correspondence: \texttt{evan@cip.org}} \and Andrew Konya}

\date{}

\begin{document}

\maketitle

\begin{abstract}
Global survey research increasingly informs high-stakes decisions in AI governance and cross-cultural policy, yet no standardized metric quantifies how well a sample's demographic composition matches its target population. Response rates and demographic quotas---the prevailing proxies for sample quality---measure effort and coverage but not distributional fidelity. This paper introduces the Global Representativeness Index (GRI), a framework grounded in Total Variation Distance that scores any survey sample against population benchmarks across multiple demographic dimensions on a $[0, 1]$ scale. Validation on seven waves of the Global Dialogues survey ($N = 7{,}500$ across 60+ countries) finds fine-grained demographic GRI scores of only 0.33--0.36---roughly 43\% of the theoretical maximum at that sample size. Cross-validation on the World Values Survey (seven waves, $N = 403{,}000$), Afrobarometer Round~9 ($N = 53{,}000$), and Latinobar\'{o}metro ($N = 19{,}000$) reveals that even large probability surveys score below 0.22 on fine-grained global demographics when country coverage is limited. The GRI connects to classical survey statistics through the design effect; both metrics are recommended as a minimum summary of sample quality, since GRI quantifies demographic distance symmetrically while effective $N$ captures the asymmetric inferential cost of underrepresentation. The paper also introduces the Strategic Representativeness Index (SRI) for optimizing sampling design. The framework is released as an open-source Python library with UN and Pew Research Center population benchmarks, applicable to survey research, machine learning dataset auditing, and AI evaluation benchmarks.
\end{abstract}

\noindent\textbf{Keywords:} representativeness, survey methodology, Total Variation Distance, demographic measurement, machine learning dataset auditing, AI governance

\section{Introduction}

\subsection{The Stakes of Non-Representative Data}

When the European Union drafted the AI Act \citep{eu2024aiact}, it drew on public opinion research to calibrate risk categories. When UNESCO published its Recommendation on the Ethics of Artificial Intelligence---the first global normative instrument on AI \citep{unesco2021ai}---the consultation process claimed to represent ``global perspectives.'' But whose perspectives? A survey that oversamples urban, English-speaking, highly-educated respondents from a handful of countries does not represent the world, regardless of how many countries appear in its sample frame. The gap between claimed and actual representativeness shapes which voices inform regulation, which cultural values get encoded into AI systems, and which populations bear the costs of policies designed without their input.

This problem extends beyond AI governance. The World Values Survey \citep{haerpfer2022wvs}, Afrobarometer, Latinobar\'{o}metro, and other major cross-national instruments invest heavily in sampling design, yet each faces the fundamental challenge of measuring how closely their achieved samples match the populations they claim to represent. Response rates---once the gold standard of survey quality---have declined precipitously and measure participation rather than demographic fidelity \citep{brick2013explaining}. Demographic quotas ensure minimum representation of specified groups but say nothing about whether the \emph{joint distribution} of characteristics in the sample mirrors the population. Post-stratification weights can adjust for known imbalances, but they correct analysis rather than measuring the underlying problem.

What the field lacks is a formal, reproducible metric that answers a simple question: \emph{How well does this sample's demographic composition match the target population?}

\subsection{The Measurement Gap}

The absence of a standardized representativeness metric creates three practical problems. First, researchers cannot objectively compare the demographic quality of different surveys or different waves of the same survey. Second, survey designers lack quantitative targets for sampling---they can set quotas for individual demographics but cannot optimize for the joint distribution across multiple dimensions. Third, consumers of survey research---policymakers, journalists, the public---have no way to assess claims of ``global'' or ``representative'' sampling beyond trusting the methodology section.

R-indicators \citep{schouten2009indicators} measure the variation in response propensities across population subgroups, requiring auxiliary population data to model these propensities. R-indicators answer a related but distinct question: ``How uniform is the response mechanism?'' rather than ``How closely does the achieved sample match the population distribution?'' The GRI complements R-indicators by directly measuring the distributional outcome regardless of the mechanism that produced it---a distinction that matters particularly for non-probability samples where response propensities are undefined. Balance metrics from the causal inference literature \citep{rubin2001using,stuart2010matching} assess covariate balance between treatment and control groups but are designed for experimental rather than survey contexts.

\subsection{Contributions}

This paper makes four contributions:
\begin{enumerate}
    \item \textbf{A formal mathematical framework} for measuring survey representativeness based on Total Variation Distance, producing an interpretable score bounded on $[0, 1]$ with clear properties.
    \item \textbf{A multi-dimensional approach} that evaluates representativeness across three benchmark dimensions simultaneously---Country $\times$ Gender $\times$ Age, Country $\times$ Religion, and Country $\times$ Urban/Rural Environment---using authoritative population data from the United Nations and Pew Research Center.
    \item \textbf{Empirical validation} through application to seven waves of the Global Dialogues survey ($N \approx 1{,}000$ per wave), seven waves of the World Values Survey ($N = 9{,}000$--$85{,}000$ per wave), and regional surveys (Afrobarometer, Latinobar\'{o}metro), including Monte Carlo simulation of maximum achievable scores and efficiency analysis.
    \item \textbf{An open-source Python library} (\texttt{gri}) implementing the complete framework.
\end{enumerate}

\subsection{Scope and Normative Commitments}

A representativeness metric necessarily embeds normative choices. The GRI, as formulated with global population benchmarks, adopts a specific normative position: \emph{one person, one unit of representativeness}. A survey is maximally representative when its demographic composition mirrors the world's population, weighting each person equally regardless of nationality, political influence, or institutional context.

This is a defensible default for research claiming to capture ``global perspectives.'' But it is not the only defensible choice. An AI governance survey might reasonably weight countries by their AI capability or regulatory influence; a climate adaptation survey might weight by climate vulnerability. The GRI framework accommodates such choices: researchers can substitute custom population benchmarks that reflect their specific representativeness goals.

We also note an important distinction: the GRI measures \emph{marginal distributional distance}---how closely the sample's demographic composition matches the target population. It does not directly measure inferential quality: a sample with low GRI can produce unbiased estimates if appropriate post-stratification weights are applied, and a sample with high GRI can still produce biased estimates if respondents within each demographic cell are non-randomly selected. The GRI measures the \emph{input} to inference---the demographic composition of the raw sample---not the quality of the estimates that emerge after weighting and adjustment.

\section{Beyond Response Rates: The Case for Distributional Metrics}

\subsection{Classical Foundations and Their Limits}

Survey sampling theory, formalized by \citet{neyman1934two} and extended by \citet{kish1965survey} and \citet{horvitz1952generalization}, provides rigorous frameworks for designing probability samples and producing unbiased estimators. For global surveys, the conditions for probability sampling rarely hold simultaneously: no complete sampling frame exists for the world's population, and multi-country probability sampling requires coordinated fieldwork across vastly different national contexts.

Non-probability approaches---online panels, snowball sampling, convenience samples---are increasingly common precisely because they are feasible at scale. Such designs sacrifice the theoretical guarantees of probability sampling in exchange for breadth and speed. The question of \emph{how representative the resulting sample actually is} becomes correspondingly more urgent.

\subsection{Representativeness Metrics in Current Practice}

Response rates bear little systematic relationship to nonresponse bias \citep{groves2006nonresponse}, a finding reinforced by meta-analyses across survey contexts \citep{groves2008impact}. R-indicators \citep{schouten2009indicators} measure how much response propensities vary across population subgroups. Like the GRI, R-indicators require auxiliary population data. The key conceptual difference is that R-indicators measure \emph{mechanism} (variation in response propensities) while the GRI measures \emph{outcome} (distance between achieved and target distributions). R-indicators are well-suited to probability surveys where response propensities are meaningful; the GRI applies equally to probability and non-probability samples.

\subsection{Total Variation Distance as a Foundation}

Total Variation Distance (TVD) is the natural metric for comparing discrete probability distributions. For two distributions $P$ and $Q$ defined over the same finite set:
\begin{equation}\label{eq:tvd}
    \text{TVD}(P, Q) = \frac{1}{2} \sum_{i} |p_i - q_i|
\end{equation}

TVD has several desirable properties: it is bounded ($0 \leq \text{TVD} \leq 1$), symmetric, interpretable (the fraction of probability mass that must be moved to transform one distribution into the other), non-parametric, and decomposable (each stratum's contribution is identifiable).

Alternative metrics have drawbacks for this application. KL divergence is asymmetric, undefined when $q_i = 0$, and unbounded. The Hellinger distance, $H(P,Q) = \frac{1}{\sqrt{2}}\sqrt{\sum_i (\sqrt{p_i} - \sqrt{q_i})^2}$, is symmetric and bounded but less directly interpretable. The chi-squared distance is sensitive to small expected counts---precisely the scenario that arises frequently in global benchmarks.

We opt for a TVD + Diversity Score decomposition over a single Hellinger-based index for three reasons: (1) \textbf{interpretability}---a TVD of 0.3 means ``30\% of mass is misallocated,'' while Hellinger 0.3 has no simple prose interpretation; (2) \textbf{decomposability}---TVD decomposes additively into per-stratum contributions, enabling segment-level diagnostics; and (3) \textbf{separation of concerns}---distributional fidelity and coverage are conceptually distinct aspects that practitioners need to evaluate independently. TVD and Hellinger are formally related: $H^2(P,Q) \leq \text{TVD}(P,Q) \leq H(P,Q)\sqrt{2}$, so the choice is one of transparency, not mathematical necessity.

\section{Methodology}

\subsection{The Global Representativeness Index}

\subsubsection{Formal Definition}

\begin{definition}
Let $\mathcal{S} = \{s_1, s_2, \ldots, s_K\}$ denote the set of $K$ demographic strata. For a survey sample of size $N$ and a reference population, define $p_i$ as the sample proportion in stratum $i$ and $q_i$ as the population proportion. The \textbf{Global Representativeness Index} is:
\begin{equation}\label{eq:gri}
    \text{GRI} = 1 - \frac{1}{2} \sum_{i=1}^{K} |p_i - q_i|
\end{equation}
Equivalently, $\text{GRI} = 1 - \text{TVD}(P, Q)$.
\end{definition}

\subsubsection{Properties}

\begin{theorem}[Boundedness]\label{thm:bounds}
For any two discrete probability distributions $P$ and $Q$ over $K$ categories, $0 \leq \text{GRI}(P, Q) \leq 1$.
\end{theorem}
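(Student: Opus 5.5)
Since $\text{GRI}(P,Q) = 1 - \text{TVD}(P,Q)$ by Definition~1, Theorem~\ref{thm:bounds} is equivalent to $0 \leq \text{TVD}(P,Q) \leq 1$, and I will prove the two inequalities for TVD separately.

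The lower bound on TVD, which gives $\text{GRI} \leq 1$, is immediate. Each term $|p_i - q_i|$ is nonnegative, so the sum in \eqref{eq:tvd} is nonnegative, and hence $\text{GRI} \leq 1$. Equality holds exactly when $p_i = q_i$ for all $i$, which I will note as the characterization of a perfectly representative sample.

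The upper bound, which gives $\text{GRI} \geq 0$, is the only step with any content, and it is still routine. I will use the triangle inequality termwise, $|p_i - q_i| \leq |p_i| + |q_i| = p_i + q_i$, which relies on nonnegativity of the proportions. Summing over $i = 1,\dots,K$ and using $\sum_i p_i = \sum_i q_i = 1$ gives
\begin{equation*}
\sum_{i=1}^{K} |p_i - q_i| \leq \sum_{i=1}^{K} p_i + \sum_{i=1}^{K} q_i = 2 .
\end{equation*}
Therefore $\text{TVD} \leq 1$ and $\text{GRI} \geq 0$.

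For completeness, I will show the lower bound $\text{GRI} = 0$ is attained. Equality in the termwise triangle inequality requires $\min(p_i, q_i) = 0$ for every $i$, that is, disjoint supports. An example is a sample concentrated entirely in strata with zero population share. This shows both bounds are sharp. The main thing to be careful about is stating explicitly that $P$ and $Q$ are genuine probability vectors over the same $K$ strata (nonnegative and summing to one), since the upper bound fails otherwise. In the paper's setting this holds because the $p_i$ are sample proportions and the $q_i$ are population proportions.
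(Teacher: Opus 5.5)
Your proof is correct and follows the paper's argument: reduce to $0 \leq \text{TVD} \leq 1$, get the lower bound from nonnegativity of absolute values, and get the upper bound from $|p_i - q_i| \leq p_i + q_i$ summed to $2$, with equality exactly under disjoint supports. Your remarks on sharpness ($P = Q$ for $\text{GRI} = 1$) and on requiring $P, Q$ to be genuine probability vectors over the same strata are accurate and match the paper's appendix note.
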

\begin{proof}
Since $\text{GRI} = 1 - \text{TVD}$, it suffices to show $0 \leq \text{TVD} \leq 1$. The lower bound follows from non-negativity of absolute values. For the upper bound: $\sum_i |p_i - q_i| \leq \sum_i (p_i + q_i) = 2$, so $\text{TVD} \leq 1$. Equality holds when $P$ and $Q$ have disjoint support.
\end{proof}

\begin{theorem}[Monotonicity]\label{thm:mono}
Let $P$ be a sample distribution and $Q$ the target. Transferring mass $\delta > 0$ from an over-represented stratum $j$ ($p_j > q_j$) to an under-represented stratum $k$ ($p_k < q_k$), with $\delta \leq \min(p_j - q_j, q_k - p_k)$, strictly increases the GRI.
\end{theorem}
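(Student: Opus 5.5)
The plan is a direct computation. The transfer changes only the $j$-th and $k$-th terms of the sum in \eqref{eq:gri}, so the change in GRI is determined by how those two absolute values move. Define the new sample distribution $P'$ by $p'_j = p_j - \delta$, $p'_k = p_k + \delta$, and $p'_i = p_i$ for $i \notin \{j,k\}$. First I will note that $P'$ is still a probability distribution. Total mass is preserved. Nonnegativity holds because $p'_j \geq q_j \geq 0$ and $p'_k > p_k \geq 0$.

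Next I will use the constraint $\delta \leq \min(p_j - q_j, q_k - p_k)$ to remove the absolute values without changing sign.
\begin{itemize}
\item Since $\delta \leq p_j - q_j$, we have $p'_j - q_j = (p_j - q_j) - \delta \geq 0$. Hence $|p'_j - q_j| = |p_j - q_j| - \delta$.
\item Since $\delta \leq q_k - p_k$, we have $q_k - p'_k = (q_k - p_k) - \delta \geq 0$. Hence $|p'_k - q_k| = |p_k - q_k| - \delta$.
\end{itemize}

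Summing over all strata, only the $j$-th and $k$-th terms change, so
\begin{equation*}
\sum_i |p'_i - q_i| = \sum_i |p_i - q_i| - 2\delta .
\end{equation*}
Substituting into the definition gives $\text{GRI}(P',Q) = \text{GRI}(P,Q) + \delta$. Because $\delta > 0$, the increase is strict. This also yields the sharper statement that the GRI rises by exactly the transferred mass $\delta$. The same identity gives $\text{TVD}(P',Q) = \text{TVD}(P,Q) - \delta$.

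There is no real obstacle here. The only point needing care is the sign bookkeeping in the middle step. The upper bound on $\delta$ guarantees that neither stratum overshoots its target, so each absolute value decreases by exactly $\delta$. Without that bound, the differences $p'_j - q_j$ or $p'_k - q_k$ could flip sign, and the change in GRI could then be smaller or even zero.
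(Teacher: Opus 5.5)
Your proof is correct and takes essentially the same approach as the paper: a direct computation in which only the $j$-th and $k$-th terms change, each absolute value drops by exactly $\delta$ because of the bound on $\delta$, so TVD falls by $\delta$ and the GRI rises by $\delta$. You also spell out the sign bookkeeping and check that $P'$ is still a probability distribution, both of which the paper's one-line proof leaves implicit.
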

\begin{proof}
The change in TVD is $\Delta\text{TVD} = \frac{1}{2}(|p_j - \delta - q_j| + |q_k - p_k - \delta| - |p_j - q_j| - |q_k - p_k|) = -\delta < 0$.
\end{proof}

\subsubsection{Interpretation Scale}

We propose interpretation guidelines calibrated against Monte Carlo simulations of achievable scores (Table~\ref{tab:interpretation}).

\begin{table}[ht]
\centering
\caption{GRI interpretation guidelines.}
\label{tab:interpretation}
\begin{tabular}{@{}lll@{}}
\toprule
GRI Score & Interpretation & Meaning \\
\midrule
$0.8$--$1.0$ & Excellent & $<20\%$ of mass misallocated \\
$0.6$--$0.8$ & Good & $20$--$40\%$ misallocation \\
$0.4$--$0.6$ & Moderate & $40$--$60\%$ misallocation \\
$0.0$--$0.4$ & Poor & $>60\%$ misallocation \\
\bottomrule
\end{tabular}
\end{table}

\subsection{The Diversity Score}

The GRI measures distributional fidelity; the Diversity Score captures \emph{coverage}---how many population strata the sample reaches at all.

\begin{definition}
Let $N$ be the sample size and $X = 1/N$ the relevance threshold. A stratum is \textbf{relevant} if $q_i > X$ and \textbf{represented} if $p_i > 0$. The Diversity Score is:
\begin{equation}\label{eq:diversity}
    \text{DiversityScore} = \frac{|\{i : p_i > 0 \text{ and } q_i > X\}|}{|\{i : q_i > X\}|}
\end{equation}
\end{definition}

The threshold $X = 1/N$ identifies strata where we would expect at least one observation in a perfectly proportional sample. Under the Poisson approximation, the probability of observing at least one member is $1 - e^{-1} \approx 0.632$ for relevant strata.

\subsection{Multi-Dimensional Scorecard}

Three primary dimensions define the cross-classified strata (Table~\ref{tab:benchmarks}). The choice reflects three criteria: (1) availability of authoritative, country-level data covering virtually all nations; (2) established relevance to opinion formation in social science; and (3) respondent observability.

\begin{table}[ht]
\centering
\caption{Benchmark data sources.}
\label{tab:benchmarks}
\begin{tabular}{@{}lllrl@{}}
\toprule
Source & Variables & Coverage & Year & Strata \\
\midrule
UN World Population Prospects & Country, Gender, Age & 237 areas & 2023 & 2,699 \\
Pew Global Religious Landscape & Country, Religion & 232 countries & 2010 & 1,607 \\
UN World Urbanization Prospects & Country, Urban/Rural & 233 countries & 2018 & 449 \\
\bottomrule
\end{tabular}
\end{table}

The scorecard also evaluates representativeness at coarser resolutions---Region (22 UN sub-regions) and Continent (6)---crossed with demographic variables, plus marginal distributions. This produces 13 dimensions ranging from Country $\times$ Gender $\times$ Age (most demanding) to Gender (least).

\subsection{Maximum Possible Scores and Efficiency Ratios}

Perfect representativeness (GRI = 1.0) is mathematically impossible at realistic sample sizes when the number of strata is large. We estimate the \textbf{maximum achievable GRI} through Monte Carlo simulation: for each dimension and sample size, we draw 1,000 optimal sample allocations and report the mean maximum GRI (Table~\ref{tab:max_gri}).

\begin{table}[ht]
\centering
\caption{Maximum achievable GRI by sample size (Monte Carlo, 1,000 simulations).}
\label{tab:max_gri}
\begin{tabular}{@{}lccccc@{}}
\toprule
Dimension & $N=100$ & $N=250$ & $N=500$ & $N=1{,}000$ & $N=2{,}000$ \\
\midrule
Country $\times$ Gender $\times$ Age & 0.430 & 0.581 & 0.691 & 0.792 & 0.873 \\
Country $\times$ Religion & 0.721 & 0.839 & 0.898 & 0.938 & 0.965 \\
Country $\times$ Environment & 0.714 & 0.844 & 0.906 & 0.950 & 0.976 \\
\bottomrule
\end{tabular}
\end{table}

The \textbf{efficiency ratio} separates structural limitations from allocation failures:
\begin{equation}\label{eq:efficiency}
    \text{Efficiency} = \frac{\text{GRI}_{\text{actual}}}{\text{GRI}_{\text{max}}}
\end{equation}

The Monte Carlo maximum assumes an oracle allocator; real surveys face recruitment friction, language barriers, and differential willingness to participate. The max GRI is therefore an upper bound on the upper bound---but low efficiency against even this generous ceiling indicates clear room for improvement.

\subsection{The Strategic Representativeness Index (SRI)}

The SRI replaces the population-proportional target with a square-root-proportional target:
\begin{equation}\label{eq:sri_target}
    s_i^* = \frac{\sqrt{q_i}}{\sum_j \sqrt{q_j}}
\end{equation}
\begin{equation}\label{eq:sri}
    \text{SRI} = 1 - \frac{1}{2} \sum_{i=1}^{K} |p_i - s_i^*|
\end{equation}

The $\sqrt{q_i}$ target can be derived as the allocation minimizing the maximum relative estimation error across strata when within-stratum variances are proportional to stratum size. It moderately boosts targets for smaller strata while tempering larger ones---useful for prospective survey design.

\subsection{The Inferential Cost of Low Representativeness}

A natural objection to the GRI is: if the sample's demographic composition doesn't match the population, can't we simply apply post-stratification weights? The answer is yes---but at a quantifiable cost to statistical precision.

When sample proportions $p_i$ differ from population proportions $q_i$, applying post-stratification weights $w_i = q_i / p_i$ produces unbiased population-level estimates. But the variance of the weighted estimator is inflated by the \textbf{design effect} \citep{kish1965survey}:
\begin{equation}\label{eq:deff}
    d_{\text{eff}} = 1 + \text{CV}^2(w)
\end{equation}
where $\text{CV}^2(w) = \text{Var}(w) / \bar{w}^2$ is the squared coefficient of variation of the weights. The \textbf{effective sample size} is:
\begin{equation}\label{eq:neff}
    N_{\text{eff}} = \frac{N}{d_{\text{eff}}}
\end{equation}

A design effect of 3.0 means the weighted estimates have the same precision as a simple random sample one-third the size---two-thirds of the data collection budget has been consumed by the need to reweight.

Equations~\eqref{eq:deff}--\eqref{eq:neff} assume full benchmark coverage: every population stratum contains at least one survey respondent. When some strata are unrepresented ($p_i = 0$, $q_i > 0$), reweighting cannot correct for these gaps---point~(2) in the list below. In practice, we compute $d_{\text{eff}}$ over the \emph{covered} subset of strata, renormalizing both $p$ and $q$ to sum to one over that subset. Let $f = \sum_{i \in \mathcal{C}} p_i$ denote the fraction of survey respondents in benchmark-matched strata $\mathcal{C} = \{i : p_i > 0 \text{ and } q_i > 0\}$. The coverage-adjusted effective sample size is:
\begin{equation}\label{eq:neff_adj}
    N_{\text{eff}} = \frac{N \cdot f}{d_{\text{eff}}}
\end{equation}
where $d_{\text{eff}} = \sum_{i \in \mathcal{C}} \tilde{q}_i^2 / \tilde{p}_i$ uses the renormalized proportions $\tilde{q}_i = q_i / \sum_{j \in \mathcal{C}} q_j$ and $\tilde{p}_i = p_i / f$. When coverage is complete ($f = 1$), this reduces to Equation~\eqref{eq:neff}. All design effect and effective $N$ values reported in this paper use the coverage-adjusted formulation.

The connection to the GRI is through the weight distribution: when $p_i \approx q_i$ for all strata, the weights are near 1.0, the CV of weights is small, and the design effect approaches 1.0. As distributional mismatch grows (GRI decreases), some weights become extreme, the CV increases, and precision degrades. Formally, $d_{\text{eff}} = \sum_i q_i^2 / p_i$---driven by a ratio-based divergence (related to $\chi^2$ distance) rather than the $L_1$ distance underlying TVD. The critical distinction is one of \textbf{symmetry}: TVD treats overrepresentation and underrepresentation of the same absolute magnitude equally, while the design effect is asymmetric---underrepresentation is far more expensive, because the reweighting ratio $q_i/p_i$ amplifies noise from the few respondents in thin strata, whereas overrepresentation merely downweights excess data (wasteful but not precision-destroying). Two surveys with the same GRI can therefore have very different design effects. We recommend reporting \emph{both}: the GRI measures \emph{how much} the sample deviates; the design effect reveals \emph{how costly} that deviation is.

Post-stratification reweighting is not a substitute for representative sampling for four reasons: (1) variance inflation is quadratic---doubling $q_i/p_i$ quadruples that stratum's contribution to the design effect; (2) empty strata ($p_i = 0$) are uncorrectable; (3) extreme weights increase model dependence and sensitivity to outliers; and (4) reweighting treats the symptom (bias) but not the cause (poor allocation), leaving confidence intervals unnecessarily wide.

\section{Empirical Application: The Global Dialogues Case Study}

\subsection{The Global Dialogues Survey}

The Global Dialogues (GD) is a longitudinal survey of public perceptions of AI, recruiting approximately 1,000 participants per wave through online purposive sampling across 50--70 countries (Table~\ref{tab:gd_waves}).

\begin{table}[ht]
\centering
\caption{Global Dialogues survey waves.}
\label{tab:gd_waves}
\begin{tabular}{@{}lrl@{}}
\toprule
Wave & $N$ & Countries \\
\midrule
GD1 & 1,280 & $\sim$60 \\
GD2 & 1,105 & $\sim$55 \\
GD3 & 971 & 63 \\
GD4 & 1,050 & 57 \\
GD5 & 1,057 & $\sim$60 \\
GD6 & 1,037 & $\sim$60 \\
GD7 & 1,022 & 65 \\
\bottomrule
\end{tabular}
\end{table}

\subsection{GRI Scores Across Seven Waves}

Table~\ref{tab:gri_scores} presents GRI scores for primary and selected auxiliary dimensions, and Figure~\ref{fig:scorecard_heatmap} provides a visual summary of the complete 13-dimension scorecard across all waves. Gender balance is essentially perfect (GRI $> 0.97$). Continental representativeness is good (GRI $\approx 0.83$). The picture deteriorates sharply at finer granularity: Country $\times$ Gender $\times$ Age averages only 0.34, indicating roughly 66\% of the joint demographic weight is misallocated.

\begin{table}[ht]
\centering
\caption{GRI scores across Global Dialogues waves GD1--GD7.}
\label{tab:gri_scores}
\begin{tabular}{@{}lccccccccc@{}}
\toprule
Dimension & GD1 & GD2 & GD3 & GD4 & GD5 & GD6 & GD7 & Mean & Range \\
\midrule
Country $\times$ Gender $\times$ Age & 0.337 & 0.329 & 0.348 & 0.361 & 0.347 & 0.341 & 0.330 & 0.342 & 0.03 \\
Country $\times$ Religion & 0.482 & 0.488 & 0.514 & 0.535 & 0.499 & 0.495 & 0.478 & 0.499 & 0.06 \\
Country $\times$ Environment & 0.424 & 0.394 & 0.435 & 0.453 & 0.414 & 0.406 & 0.396 & 0.417 & 0.06 \\
Country & 0.590 & 0.573 & 0.618 & 0.639 & 0.592 & 0.587 & 0.576 & 0.597 & 0.07 \\
Region & 0.745 & 0.739 & 0.791 & 0.799 & 0.738 & 0.734 & 0.744 & 0.756 & 0.06 \\
Continent & 0.832 & 0.830 & 0.886 & 0.883 & 0.773 & 0.802 & 0.796 & 0.829 & 0.11 \\
Religion & 0.817 & 0.819 & 0.833 & 0.826 & 0.813 & 0.806 & 0.788 & 0.814 & 0.05 \\
Gender & 0.989 & 0.990 & 0.996 & 0.979 & 0.986 & 0.995 & 0.996 & 0.990 & 0.02 \\
\bottomrule
\end{tabular}
\end{table}

\begin{figure}[t]
\centering
\includegraphics[width=\textwidth]{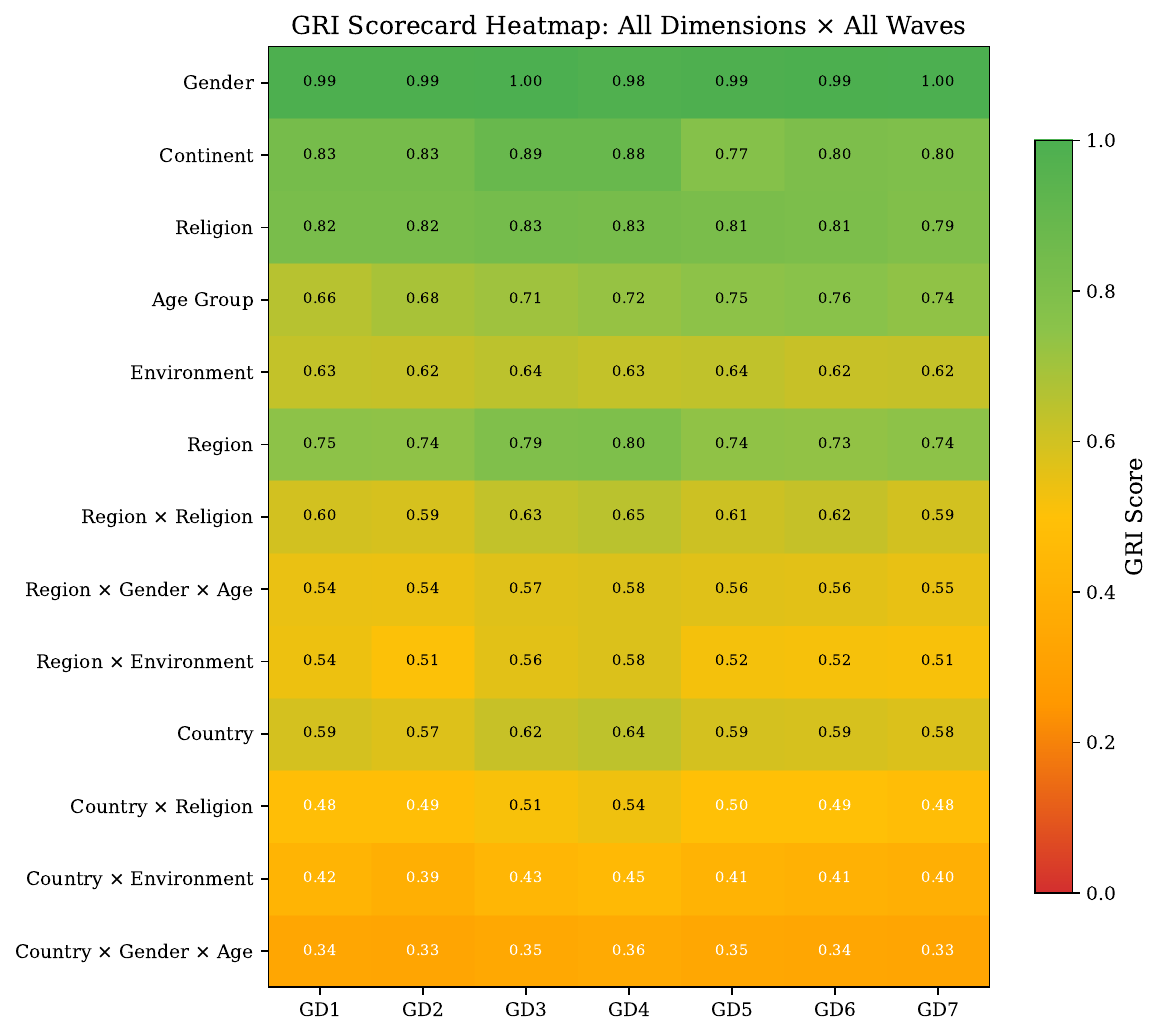}
\caption{GRI scorecard heatmap across all 13 dimensions and Global Dialogues waves. Dimensions are ordered from least demanding (Gender) to most demanding (Country $\times$ Gender $\times$ Age). Color encodes GRI score from poor (red/orange, ${<}0.4$) through moderate (yellow, $0.4$--$0.6$) to excellent (green, ${>}0.8$). The gradient reveals the hierarchical structure of representativeness: marginal demographics are well-captured while fine-grained cross-classifications remain challenging, with scores remarkably stable across waves.}
\label{fig:scorecard_heatmap}
\end{figure}

\subsection{Efficiency Analysis}

Table~\ref{tab:efficiency} shows efficiency ratios---actual GRI divided by Monte Carlo maximum. The GD achieves approximately 43\% efficiency on Country $\times$ Gender $\times$ Age and 53\% on Country $\times$ Religion, indicating room for improvement through better sampling design. GD4 achieves the highest efficiency across all primary dimensions.

\begin{table}[ht]
\centering
\caption{Efficiency ratios (actual GRI / max GRI) for primary dimensions.}
\label{tab:efficiency}
\begin{tabular}{@{}lccccccccl@{}}
\toprule
Dimension & Max & GD1 & GD2 & GD3 & GD4 & GD5 & GD6 & GD7 & Mean \\
\midrule
C $\times$ G $\times$ A & 0.792 & 42.5\% & 41.6\% & 44.0\% & 45.5\% & 43.8\% & 43.0\% & 41.7\% & 43.2\% \\
C $\times$ Religion & 0.938 & 51.4\% & 52.0\% & 54.8\% & 57.0\% & 53.2\% & 52.7\% & 50.9\% & 53.1\% \\
C $\times$ Environ. & 0.950 & 44.7\% & 41.5\% & 45.7\% & 47.7\% & 43.6\% & 42.7\% & 41.6\% & 43.9\% \\
\bottomrule
\end{tabular}
\end{table}

\subsection{Metric Comparison: GRI, SRI, and Efficiency}

Table~\ref{tab:variants} compares GRI and SRI for GD5 alongside the efficiency ratio. SRI scores are generally similar to GRI but differ at extremes: for Continent, SRI is higher (0.841 vs.\ 0.773) because the square-root transformation boosts smaller continents; for Religion, SRI is lower (0.745 vs.\ 0.813) because strategic targets boost underrepresented religions.

\begin{table}[ht]
\centering
\caption{Comparison of GRI, SRI, and efficiency ratio for GD5 ($N = 1{,}057$).}
\label{tab:variants}
\begin{tabular}{@{}lcccc@{}}
\toprule
Dimension & GRI & SRI & Max GRI & Efficiency \\
\midrule
Country $\times$ Gender $\times$ Age & 0.347 & 0.381 & 0.792 & 43.8\% \\
Country $\times$ Religion & 0.499 & 0.447 & 0.938 & 53.2\% \\
Country $\times$ Environment & 0.414 & 0.387 & 0.950 & 43.6\% \\
Country & 0.592 & 0.506 & --- & --- \\
Region & 0.738 & 0.749 & --- & --- \\
Continent & 0.773 & 0.841 & --- & --- \\
Religion & 0.813 & 0.745 & --- & --- \\
Gender & 0.986 & 0.986 & --- & --- \\
\bottomrule
\end{tabular}
\end{table}

\subsection{The Inferential Cost: Design Effect and Effective Sample Size}

Table~\ref{tab:deff} presents the design effect and effective sample size for the three primary dimensions across all GD waves.

\begin{table}[ht]
\centering
\caption{Design effect ($d_{\text{eff}}$), effective $N$ (Equation~\ref{eq:neff_adj}), and precision retained for GD waves on Country $\times$ Gender $\times$ Age.}
\label{tab:deff}
\begin{tabular}{@{}lrrrr@{}}
\toprule
Wave & $N$ & $d_{\text{eff}}$ & $N_{\text{eff}}$ & Precision \\
\midrule
GD1 & 1,280 & 3.18 & 199 & 31.4\% \\
GD2 & 1,105 & 3.85 & 136 & 25.9\% \\
GD3 & 971 & 2.95 & 166 & 33.9\% \\
GD4 & 1,050 & 2.46 & 233 & 40.7\% \\
GD5 & 1,057 & 3.02 & 195 & 33.1\% \\
GD6 & 1,037 & 2.69 & 202 & 37.2\% \\
GD7 & 1,022 & 3.09 & 188 & 32.3\% \\
\midrule
\textbf{Mean} & \textbf{1,075} & \textbf{3.04} & \textbf{189} & \textbf{33.5\%} \\
\bottomrule
\end{tabular}
\end{table}

On average, the GD's $\sim$1,000 respondents yield an effective sample size of only \textbf{189} for Country $\times$ Gender $\times$ Age inference---roughly two-thirds of the data budget is consumed by reweighting. For Country $\times$ Environment, average $d_{\text{eff}} = 11.41$ yields $N_{\text{eff}} \approx 96$. The design effect table reveals the asymmetry discussed in Section~3.6: Country $\times$ Environment has the highest design effects despite fewer strata, reflecting severe rural underrepresentation in online surveys.

\subsection{Cross-Survey Validation: The World Values Survey}\label{sec:wvs}

We apply the framework to seven waves of the World Values Survey \citep{haerpfer2022wvs}, the largest cross-national values survey, employing probability sampling with 10--60 countries per wave and $N = 9{,}144$--$85{,}219$ (Table~\ref{tab:wvs_comparison}).

\begin{table}[ht]
\centering
\caption{GRI comparison: Global Dialogues vs.\ World Values Survey.}
\label{tab:wvs_comparison}
\begin{tabular}{@{}lcc@{}}
\toprule
Metric & GD (7-wave mean) & WVS (7-wave mean) \\
\midrule
Sample size per wave & $\sim$1,075 & $\sim$57,600 \\
Countries per wave & $\sim$59 & $\sim$40 \\
CGA GRI & 0.342 & 0.195 \\
Country $\times$ Religion GRI & 0.499 & 0.301 \\
Country $\times$ Environment GRI & 0.417 & 0.320 \\
CGA $d_{\text{eff}}$ & 3.04 & 2.71 \\
CGA $N_{\text{eff}}$ & 189 & 5,695 \\
CGA Precision Retained & 33.5\% & 42.6\% \\
\bottomrule
\end{tabular}
\end{table}

The comparison reveals a fundamental trade-off: the GD achieves higher GRI because it recruits from more countries ($\sim$58 vs.\ $\sim$40), while the WVS achieves higher effective $N$ through sheer sample size despite worse global representativeness. This demonstrates precisely why reporting \emph{both} GRI and effective $N$ is essential: GRI alone would favor the GD; effective $N$ alone would favor the WVS; together they communicate the complete picture.

\subsection{Regional Surveys: Afrobarometer and Latinobar\'{o}metro}

We also apply the framework to Afrobarometer Round~9 \citep{afrobarometer2024} ($N = 53{,}444$, 39~countries) and Latinobar\'{o}metro 2023--2024 \citep{latinobarometro2024} ($N \approx 19{,}200$, 17~countries), using country-filtered benchmarks restricted to each survey's claimed coverage area. Afrobarometer achieves CGA GRI = 0.532 with $d_{\text{eff}} = 2.01$ ($N_{\text{eff}} = 16{,}152$); Latinobar\'{o}metro 2023 achieves CGA GRI = 0.480 with $d_{\text{eff}} = 2.51$ ($N_{\text{eff}} = 5{,}666$). These regional surveys demonstrate that large-scale probability surveys with comprehensive regional coverage can achieve good representativeness combined with strong inferential power. The year-over-year stability of Latinobar\'{o}metro scores (2023 vs.\ 2024) also demonstrates the GRI's utility for longitudinal monitoring.

\subsection{The Complementarity of GRI and Effective Sample Size}

Figure~\ref{fig:cross_survey} plots GRI against effective sample size for all five surveys across the three primary benchmark dimensions. The clustering reveals the central empirical finding: no survey dominates on both metrics simultaneously. The Global Dialogues achieves moderate GRI through broad country coverage but yields low $N_{\text{eff}}$ due to small per-country samples. The World Values Survey achieves high $N_{\text{eff}}$ through large within-country samples but low GRI because its country selection covers a fraction of the global demographic distribution. The regional surveys---evaluated against country-filtered benchmarks for their claimed coverage areas---achieve both high GRI and high $N_{\text{eff}}$, illustrating the advantage of matching sampling ambition to achievable scope. This figure provides visual evidence for the paper's recommendation: reporting GRI and $N_{\text{eff}}$ together communicates what neither metric captures alone.

\begin{figure}[ht]
\centering
\includegraphics[width=\textwidth]{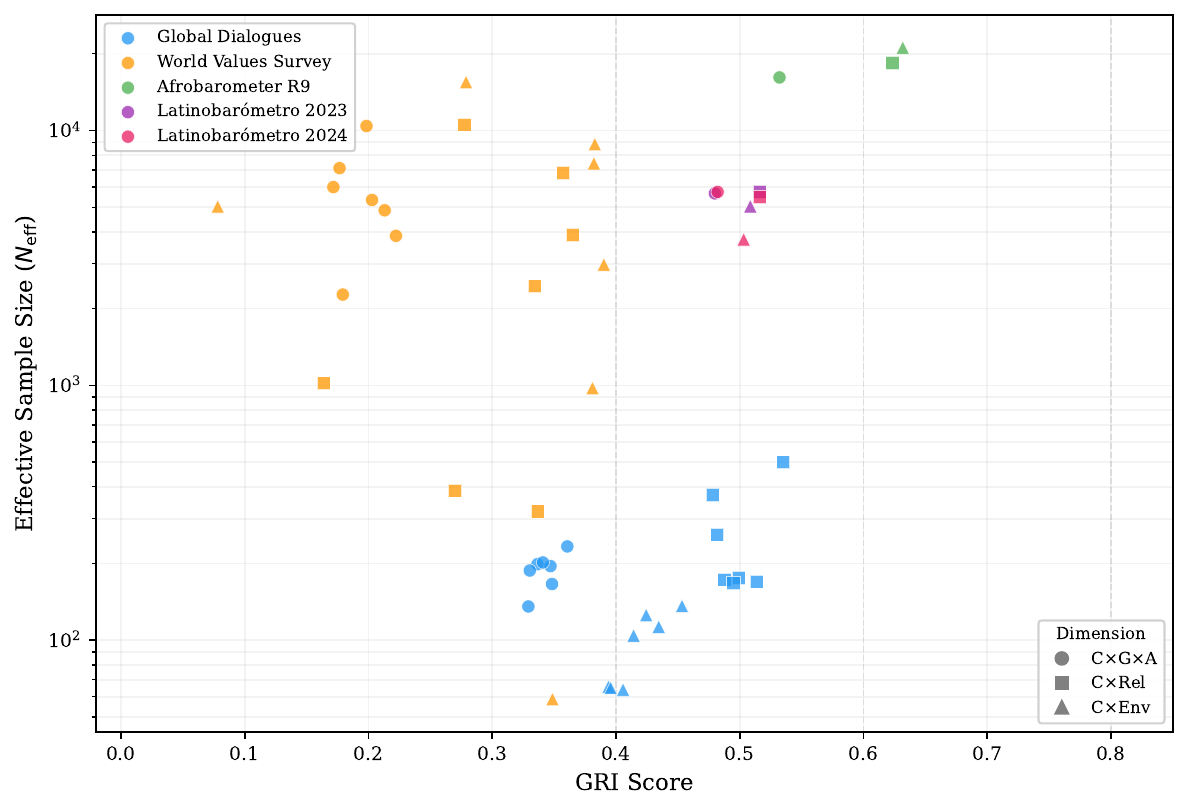}
\caption{GRI vs.\ effective sample size across five surveys and three primary benchmark dimensions (Country $\times$ Gender $\times$ Age, Country $\times$ Religion, Country $\times$ Environment). Each point represents one survey wave evaluated on one dimension. Regional surveys (Afrobarometer, Latinobar\'{o}metro) use country-filtered benchmarks; global surveys (GD, WVS) use world population benchmarks. The log-scale $y$-axis spans three orders of magnitude, reflecting the vast differences in inferential power across survey designs.}
\label{fig:cross_survey}
\end{figure}

\section{What Counts as Representative?}

\subsection{The Interpretation Challenge}

A GRI of 0.34 for Country $\times$ Gender $\times$ Age sounds alarming. Three considerations moderate this. First, high GRI on fine-grained dimensions is \emph{mathematically constrained} at moderate sample sizes---even a perfectly allocated sample of 1,000 can achieve at most 0.792. The application to the World Values Survey (Section~\ref{sec:wvs}) confirms this is universal: the WVS, with 60,000--85,000 respondents per wave, achieves CGA GRI scores of only 0.17--0.22 due to limited country coverage. Second, the GRI measures distributional fidelity, not fitness for purpose: a survey designed to compare AI attitudes between specific countries might intentionally oversample them. Third, the efficiency ratio is more informative than the raw score---43\% efficiency directly suggests room for improvement.

\subsection{Why Religion Scores Higher Than Demographics}

Country $\times$ Religion GRI (mean: 0.50) exceeds Country $\times$ Gender $\times$ Age (mean: 0.34) for three reasons: fewer strata (1,607 vs.\ 2,699), larger stratum proportions (major religions cover $>85\%$ of the population), and geographic correlation (covering many countries naturally captures religious diversity). Country $\times$ Environment (mean: 0.42), despite having the fewest strata (449), scores lower due to systematic urban bias in online surveys.

\subsection{Beyond Surveys: ML Dataset Auditing}

The GRI framework applies to any dataset with categorical demographic attributes and a reference population. Training datasets with demographic skews produce models that perform differently across groups \citep{buolamwini2018gender}. Current documentation practices---Datasheets for Datasets \citep{gebru2021datasheets}, Model Cards \citep{mitchell2019model}---advocate qualitative demographic description. The GRI provides the quantitative counterpart: a standardized score enabling comparison across datasets and tracking over time.

\section{Practical Implementation}

\subsection{The \texttt{gri} Python Library}

The framework is implemented as an open-source Python library comprising 16 modules organized into calculation, analysis, data, and presentation layers.\footnote{Source code and documentation: \url{https://github.com/collect-intel/gri}} It exposes two API surfaces:

\begin{lstlisting}[caption={Functional API for scripted pipelines.},label={lst:functional}]
from gri import calculate_gri, load_benchmark_suite

benchmarks = load_benchmark_suite()
gri_score = calculate_gri(
    survey_df, benchmarks['country_gender_age'],
    strata_cols=['country', 'gender', 'age_group']
)
\end{lstlisting}

\begin{lstlisting}[caption={Object-oriented API for interactive analysis.},label={lst:oo}]
from gri import GRIAnalysis

analysis = GRIAnalysis.from_survey_file('survey.csv')
scorecard = analysis.calculate_scorecard(
    include_max_possible=True, n_simulations=1000
)
analysis.plot_scorecard(save_to='scorecard.png')
\end{lstlisting}

\section{Limitations and Future Work}

\textbf{Benchmark staleness.} The religious composition data dates to 2010. While religious demographics shift slowly in most regions, rapid secularization in several countries means the benchmark may overstate affiliated populations. We recommend reporting benchmark vintage alongside scores and updating on a 5-year cycle.

\textbf{Equal weighting.} The framework reports separate dimension scores without prescribing aggregation weights. A health survey might weight age/gender higher; a governance study might weight country higher. We deliberately leave this to researchers.

\textbf{Continuous variables.} Age is discretized into 6 brackets, losing within-bracket distributional information. Extension to continuous demographics would require kernel-density-based distance metrics.

\textbf{Intersectionality.} The GRI evaluates pre-specified dimension crosses but cannot capture intersectionalities not encoded in the strata. The framework is extensible---any categorical variable with a benchmark can be added---but dimension selection has equity implications.

\textbf{Diversity Score non-monotonicity.} The $1/N$ threshold means that as $N$ increases, more strata become ``relevant.'' The Diversity Score can decrease with larger samples if newly relevant strata go unrepresented---semantically correct but potentially counterintuitive.

\textbf{Benchmark uncertainty.} Population benchmarks are estimates with uncertainty. A fully Bayesian treatment propagating benchmark uncertainty into GRI confidence intervals---treating $q_i$ as random variables reflecting census quality---remains future work.

\textbf{Scalability.} Adding dimensions multiplicatively increases strata. The pairwise scorecard approach captures the most important signals but loses higher-order interaction information.

\textbf{Subnational variation.} The current library ships only with global benchmarks. Integrating subnational data from national censuses would substantially broaden practical utility.

\section{Conclusion}

The GRI transforms a vague question---``How representative is this survey?''---into a precise, measurable quantity. Application to seven waves of the Global Dialogues survey reveals that a purposive online survey spanning 60+ countries achieves Country $\times$ Gender $\times$ Age GRI scores of only 0.33--0.36, capturing roughly 43\% of the theoretically achievable representativeness. The primary drivers are geographic concentration and structural urban bias. Cross-survey validation using the World Values Survey (seven waves, 9{,}000--85{,}000 respondents per wave) reveals a complementary pattern: WVS achieves lower GRI scores (mean CGA GRI $= 0.20$) despite far larger samples, because its country selection covers fewer of the world's demographic strata---but its effective sample sizes ($N_{\text{eff}} \approx 5{,}700$) dwarf those of the GD waves ($N_{\text{eff}} \approx 189$), reflecting the inferential advantage of probability sampling within covered populations. Application to the Afrobarometer (Round~9, $N = 53{,}444$) and Latinobar\'ometro (2023--2024, $N \approx 19{,}200$) further demonstrates generalizability across survey modalities, scales, and geographic scopes.

The GRI and design effect measure complementary aspects of sample quality, and we recommend that surveys report both as a minimum viable summary. The \textbf{GRI} answers ``How closely does this sample mirror the target population?''---a symmetric measure that treats all demographic deviations equally. The \textbf{effective sample size} (Equation~\ref{eq:neff_adj}) answers ``What is this survey actually worth for inference?''---an asymmetric measure that penalizes underrepresentation far more heavily than overrepresentation, because reweighting amplifies noise from few respondents. Together, these two numbers communicate what neither can alone: a survey with GRI $= 0.34$ and $N_{\text{eff}} = 189$ (like the typical GD wave) tells a very different story from one with GRI $= 0.20$ and $N_{\text{eff}} = 5{,}695$ (like the typical WVS wave), even though both are imperfect.

What changes if researchers adopt this framework? Three things. First, \textbf{transparency}: every survey can report a standardized, comparable representativeness score alongside its results, enabling consumers of research to calibrate their confidence in ``global'' findings. Second, \textbf{optimization}: the segment-level decomposition and efficiency analysis identify exactly which demographic gaps matter most, guiding recruitment strategy. Third, \textbf{accountability}: funders and policymakers can set minimum representativeness targets for research that claims to represent global populations, just as they set minimum sample size requirements.

The complementary metrics extend this toolkit: the SRI for designing surveys that maximize statistical power across populations, the efficiency ratio for contextualizing scores against structural constraints, the diversity score for measuring strata coverage, and the design effect for quantifying the inferential cost of distributional mismatch. Together, these metrics span the lifecycle of survey research---design (SRI, Monte Carlo max scores), execution (real-time GRI monitoring, segment deviation analysis), and evaluation (GRI, efficiency ratio, design effect, effective sample size). The framework is released as open-source software with authoritative population benchmarks from the United Nations and Pew Research Center. We invite the survey methodology community to adopt, critique, and extend it.

\section*{Acknowledgments}

We thank the Collective Intelligence Project team, especially Zarinah Agnew, for developing the Global Dialogues project, survey instruments, and partnerships, and Joal Stein for maintaining the Global Dialogues website and public data access. We thank the Prolific team for their collaboration on globally representative sampling across each Global Dialogue wave. We also thank the World Values Survey Association, Afrobarometer, and Corporaci\'{o}n Latinobar\'{o}metro for making their survey data publicly available.

\section*{Code and Data Availability}

The \texttt{gri} Python library, including all benchmark data from the United Nations and Pew Research Center, is available at \url{https://github.com/collect-intel/gri}. Complete scorecard tables for all surveys analyzed in this paper are included in the repository and can be reproduced via \texttt{make reproduce}. Global Dialogues survey data is available via the repository as a git submodule. World Values Survey data is available from \url{https://www.worldvaluessurvey.org}. Afrobarometer data is available from \url{https://www.afrobarometer.org}. Latinobar\'{o}metro data is available from \url{https://www.latinobarometro.org}.

\bibliographystyle{plainnat}
\bibliography{references}

\appendix

\section{Mathematical Proofs}\label{app:proofs}

\subsection{Proof of GRI Bounds}

See Theorem~\ref{thm:bounds}. The lower bound TVD $= 0$ is achieved when $P = Q$. The upper bound TVD $= 1$ is achieved when $P$ and $Q$ have disjoint support. $\square$

\subsection{Relationship Between TVD and Other Distances}

TVD relates to the Hellinger distance $H$ and KL divergence $D_{\text{KL}}$:
\begin{align}
    H^2(P,Q) &\leq \text{TVD}(P,Q) \leq H(P,Q)\sqrt{2} \label{eq:hellinger_bound} \\
    \text{TVD}(P,Q) &\leq \sqrt{\tfrac{1}{2} D_{\text{KL}}(P \| Q)} \label{eq:pinsker}
\end{align}
Equation~\eqref{eq:pinsker} is Pinsker's inequality \citep{lecam1986asymptotic,tsybakov2009introduction}.

\section{Pseudocode}\label{app:pseudocode}

\begin{algorithm}
\caption{GRI Calculation}\label{alg:gri}
\begin{algorithmic}[1]
\Require Survey data $S$, benchmark $B$, strata columns $C$
\State $\mathbf{p} \gets \text{GroupAndCount}(S, C) / |S|$
\State $\mathbf{q} \gets \text{Normalize}(B.\text{proportion})$
\State $M \gets \text{OuterJoin}(\mathbf{p}, \mathbf{q}, C)$; fill missing with 0
\State $\text{TVD} \gets 0.5 \times \sum_i |M.p_i - M.q_i|$
\State \Return $1.0 - \text{TVD}$
\end{algorithmic}
\end{algorithm}

\begin{algorithm}
\caption{Monte Carlo Maximum GRI}\label{alg:mc}
\begin{algorithmic}[1]
\Require Benchmark proportions $\mathbf{q}$, sample size $N$, iterations $T$
\For{$t = 1$ to $T$}
    \State $\mathbf{n}^* \gets \text{round}(\mathbf{q} \times N)$ for large strata; Bernoulli for small
    \State Adjust total to $N$
    \State $\text{GRI}_t \gets 1 - 0.5 \times \sum_i |n^*_i/N - q_i|$
\EndFor
\State \Return $\text{mean}(\{\text{GRI}_t\})$, $\text{std}(\{\text{GRI}_t\})$
\end{algorithmic}
\end{algorithm}

\section{Extended Scorecard Tables}\label{app:tables}

Figure~\ref{fig:scorecard_heatmap} provides a visual summary of GRI scores across all dimensions and waves. Complete scorecard tables for all 13 dimensions---including GRI, Diversity Score, SRI, design effect, effective $N$, and precision retention---are available in the project repository (\texttt{analysis\_output/scorecards/}) and can be reproduced with \texttt{make reproduce}.

\end{document}